\documentclass{article}

\usepackage{PRIMEarxiv}

\usepackage[utf8]{inputenc} 
\usepackage[T1]{fontenc}    
\usepackage{hyperref}       
\usepackage{url}            
\usepackage{booktabs}       
\usepackage{amsfonts}       
\usepackage{nicefrac}       
\usepackage{microtype}      
\usepackage{lipsum}
\usepackage{fancyhdr}       
\usepackage{graphicx}       
\usepackage[numbers]{natbib}
\usepackage[table]{xcolor}
\usepackage{algorithm}
\usepackage[noend]{algorithmic}
\usepackage{amsthm,amssymb,amsmath}
\usepackage{multirow}
\usepackage{cleveref} 
\graphicspath{{media/}}     

\newtheorem{theorem}{Theorem}
\newtheorem{proposition}{Proposition}

\title{Rethinking Security of Diffusion-based Generative Steganography}

\author{
  Jiahao Zhu, Zixuan Chen \\
  School of Computer Science and Engineering\\
  Sun Yat-sen University \\
   \And
  Jiali Liu \\
  Zhongshan School of Medicine \\
  Sun Yat-sen University\\
  \And
  Lingxiao Yang \\
  School of Systems Science and Engineering \\
  Sun Yat-sen University\\
  \And
  Yi Zhou\\
  Zhongshan School of Medicine \\
  Sun Yat-sen University\\
\And
  Weiqi Luo, Xiaohua Xie\\
  School of Computer Science and Engineering \\
  Sun Yat-sen University\\
}

\begin{document}
\maketitle

\begin{abstract}
  Generative image steganography is a technique that conceals secret messages within generated images, without relying on pre-existing cover images. Recently, a number of diffusion model-based generative image steganography (DM-GIS) methods have been introduced, which effectively combat traditional steganalysis techniques. In this paper, we identify the key factors that influence DM-GIS security and revisit the security of existing methods. Specifically, we first provide an overview of the general pipelines of current DM-GIS methods, finding that the noise space of diffusion models serves as the primary embedding domain. Further, we analyze the relationship between DM-GIS security and noise distribution of diffusion models, theoretically demonstrating that any steganographic operation that disrupts the noise distribution compromise DM-GIS security. Building on this insight, we propose a Noise Space-based Diffusion Steganalyzer (NS-DSer)—a simple yet effective steganalysis framework allowing for detecting DM-GIS generated images in the diffusion model noise space. We reevaluate the security of existing DM-GIS methods using NS-DSer across increasingly challenging detection scenarios. Experimental results validate our theoretical analysis of DM-GIS security and show the effectiveness of NS-DSer across diverse detection scenarios.
\end{abstract}

\keywords{Generative image steganography \and Steganalysis \and Diffusion models}
\section{Introduction}
Image steganography conceals secret messages within innocuous images for covert communication \cite{luo2024comprehensive, MANDAL20221451}. Traditional steganographic methods embed secret bits in the spatial \cite{li2015strategy,sedighi2015content,zhang2016decomposing,huang2023automatic,huang2023steganography} or frequency domains \cite{huang2012new,li2019jpeg,wang2022jpeg} of natural images (\emph{a.k.a.} cover images). However, this process inevitably alters these cover images, rendering them susceptible to detection. Moreover, these algorithms have limited embedding capacity. To overcome these limitations, a novel paradigm known as generative steganography has emerged. It leverages generative models, such as GANs \cite{liu2017coverless,wang2018sstegan,wei2022generativegane} and flow models \cite{wei2022generativeflow,zhou2022secret}, to generate stego images directly from secret messages, eliminating the need for pre-existing cover images. 

Recently, a growing number of generative image steganography methods based on diffusion models (DM-GIS) have been proposed \cite{diffusion-stego,establishing,improved,gaussian-shading,pulsar,stegaddpm}. Compared to GAN-based and flow-based counterparts, DM-GIS methods offer the advantage of being training-free. Furthermore, owing to the powerful generative capabilities of diffusion models \cite{ddpm,ldm}, the stego images they generate exhibit higher visual plausibility. With the rapid development of AIGC, pre-trained diffusion models have become increasingly accessible online, significantly facilitating the implementation of DM-GIS. Consequently, there is an increasingly urgent demand for effective detection methods. 

However, recent studies \cite{stegaddpm,diffusion-stego,pulsar,establishing,improved} have shown that current DM-GIS methods can effectively evade universal image steganalysis, achieving near-perfect empirical security. This phenomenon can be attributed to two key factors:
\textbf{(1) Unique embedding space}. Unlike traditional image steganography, which embeds secret messages directly in the image domain, DM-GIS utilizes the unique structural properties of diffusion models to embed messages in their noise space. As a result, conventional steganalysis methods that focus on the image domain \cite{xunet,yenet,srnet,wang2022jpeg} are significantly less effective against DM-GIS techniques.
\textbf{(2) Heterogeneous data sources}. In DM-GIS scenarios, cover images encompass not only natural images but also those generated by diffusion models, which belong to different data distributions in practice. Moreover, both cover and stego images can be produced using various diffusion models, sampling steps, samplers, and guidance scales, further complicating the task of distinguishing between cover and stego images.

In this paper, we rethink the steganographic security of DM-GIS and reevaluate current methods. We begin by reviewing existing DM-GIS pipelines, observing that secret messages are embedded within either the initial or intermediate noise vectors of diffusion models. This observation motivates us to link the noise space of diffusion models to DM-GIS security. Specifically, we provide a theoretical proof that any steganographic operation that disrupts the noise distribution of diffusion models will compromise DM-GIS security. We also examine the relationship between DM-GIS security and message extraction accuracy, demonstrating that, without modifying message embedding schemes, DM-GIS methods with high extraction accuracy are more susceptible to detection. Building on these insights, we introduce the \textbf{N}oise \textbf{S}pace-based \textbf{D}iffusion \textbf{S}teganalyser (NS-DSer), a straightforward yet powerful steganalysis framework for re-evaluating existing DM-GIS methods. It performs detection within the noise space of diffusion models and does not need to know the specific diffusion model configurations used by steganographers. This framework offers two merits: 
\textbf{(1)} It simplifies the complex task of distinguishing between cover and stego images by transforming it into a more tractable problem of differentiating statistical distributions, making our method both straightforward and effective.
\textbf{(2)} It enhances the practicality of steganalysis by making it insensitive to heterogeneous data sources. NS-DSer consists of two stages: a deterministic condition-free diffusion process and statistical feature extraction. In the first stage, we adopt deterministic sampling to recover the initial noise vectors of the given images without condition guidance. In the second stage, we extract statistical features from both the noise space and its corresponding transformed domain for classification.

The contributions of this paper are summarized as follows:
\begin{itemize}
    \item We outline the general pipelines of existing DM-GIS methods and provide a theoretical analysis that reveals the key factors influencing steganographic security.
    \item We propose NS-DSer, a simple yet effective DM-GIS steganalysis framework for re-evaluating existing DM-GIS methods. It efficiently addresses the challenges of unique embedding spaces and heterogeneous data sources.
    \item We design four progressively challenging scenarios to evaluate current DM-GIS methods. Experiments demonstrate that NS-DSer outperforms traditional steganalyzers in all four scenarios.
\end{itemize}
This paper is organized as follows. \Cref{sec:related work} briefly reviews existing DM-GIS methods and traditional steganalysis. \Cref{sec:preliminary} introduces diffusion models and the definition of steganographic security. \Cref{sec:theoretical_analysis_security} provides a theoretical analysis of DM-GIS security. \Cref{sec:nsdser} introduces our proposed steganalyzer NS-DSer. \Cref{sec:exp_results} reports the experimental results. Finally, \Cref{sec:conclusion} concludes the paper and outlines potential future directions.
\section{Related Work}
\label{sec:related work}
\subsection{Diffusion Model-based Generative Steganography}
Diffusion models are widely used in generative steganography. Peng \emph{et al.} \cite{stegaddpm} introduced StegaDDPM, which embeds secret bits using the probability distribution between intermediate states and generated images. Jois \emph{et al.} \cite{pulsar} proposed Pulsar, concealing messages with variance noise in the denoising process. Peng \emph{et al.} \cite{ldstega} enhanced StegaDDPM by adding a truncated Gaussian encoding mechanism and integrating Latent Diffusion Models (LDMs) \cite{ldm} as the backbone. Kim \emph{et al.} \cite{diffusion-stego} introduced a two-stage framework with four bit projection methods—Message to Noise (MN), Message to Binary (MB), Message to Centered Binary (MC), and Multi-bits—balancing extraction accuracy, security, and image quality. Yang \emph{et al.} \cite{gaussian-shading} developed Gaussian shading for copyright protection in LDMs. Hu \emph{et al.} \cite{establishing} improved robustness by embedding bits in LDMs' initial noise transform domain, while Zhou \emph{et al.} \cite{improved} used the discrete cosine transform domain. Yu \emph{et al.} \cite{cross} introduced CRoSS for image hiding with a controllable, reversible key system, and Yang \emph{et al.} \cite{diffstega} addressed key leakage in CRoSS with password-dependent key management.
\subsection{Image Steganalysis}
Traditional steganalysis relies on handcrafted features like SRM \cite{srm} and GFR \cite{gfr}, but these methods struggle with advanced techniques. Deep learning, particularly CNNs, has become popular for superior feature extraction. Qian \emph{et al.} \cite{gncnn} introduced the GNCNN model, matching classical techniques, while XuNet \cite{xunet} surpassed SRM in accuracy. Ye \emph{et al.} \cite{yenet} improved steganalysis with YeNet, outperforming SRM and maxSRM \cite{max-srm}. Boroumand \emph{et al.} \cite{srnet} proposed SRNet, a deep residual network for better detection of spatial and JPEG steganography. Deng \emph{et al.} \cite{deng} simplified this with a four-group CNN. Zhang \emph{et al.} \cite{zhunet} introduced ZhuNet, using depth-wise separable convolutions. Recent studies have focused on color image steganalysis, with Zeng \emph{et al.} \cite{wisernet} proposing WISERNet for better accuracy. Butora \emph{et al.} \cite{butora2021pretrain} evaluated ImageNet-pretrained CNNs, finding EfficientNet \cite{efficientnet} effective for JPEG but not spatial domains. You \emph{et al.} \cite{you2020siamese} proposed a Siamese CNN architecture, while Wei \emph{et al.} \cite{ucnet} introduced UCNet for advanced spatial steganalysis. Most recently, Wei \emph{et al.} proposed PENet \cite{penet}, a model that detects stego images of varying sizes without retraining.
\begin{figure}
    \centering
    \includegraphics[width=.9\textwidth]{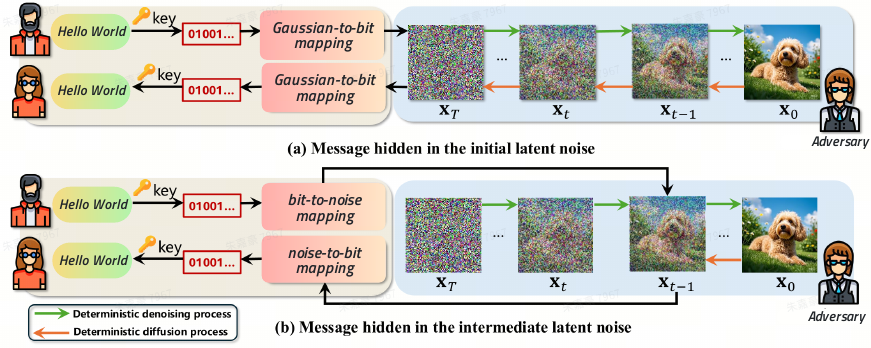}
     \caption{Two types of DM-GIS frameworks: (a) The secret messages are embedded in the initial noise $\mathbf{x}_T$, which is typically a Gaussian white noise; and (b) The secret messages are concealed within the intermediate noise $\mathbf{x}_t$ at a certain timestep $t$ in the deterministic denoising process.}
     \label{fig:framework}
\end{figure}
\section{Preliminary}
\label{sec:preliminary}
\subsection{Diffusion Models}
Diffusion models \cite{scored-generative,ddim,karras} consist of a diffusion and a denoising process. The diffusion process is modeled by an It$\hat{\text{o}}$ SDE with transition kernel $\mathbb{P}(\mathbf{x}_t|\mathbf{x}_0) := \mathcal{N}(\mathbf{x}_t; \alpha_t \mathbf{x}_0, \sigma_t^2 \mathbf{I})$. The denoising process is described by the reversed SDE:
\begin{equation}
    \label{eq:sde_reverse}
    \mathrm{d}\mathbf{x} = \left[ f(\mathbf{x}, t) - g(t)^2 \nabla_\mathbf{x} \log \mathbb{P}_t(\mathbf{x}) \right] \mathrm{d}t + g(t) \mathrm{d}\boldsymbol{\omega},
\end{equation}
where $f(\mathbf{x}, t)$ and $g(t)$ are the drift and diffusion coefficients, and $\boldsymbol{\omega}$ is Brownian motion. A neural network $\boldsymbol{\epsilon}_{\phi}(\mathbf{x}, t)$ approximates the scaled score function $-\sigma_t \nabla_{\mathbf{x}} \log \mathbb{P}_t(\mathbf{x})$. 

An image $\mathbf{x}_0$ can be obtained by solving \Cref{eq:sde_reverse} starting from noise $\mathbf{x}_T \sim \mathcal{N}(\mathbf{0}, \sigma_T^2 \mathbf{I})$. Alternatively, image generation can be done via a PF-ODE derived from \Cref{eq:sde_reverse}:
\begin{equation}
    \label{eq:pf-ode}
    \mathrm{d}\mathbf{x} = \left[ f(\mathbf{x}, t) - \frac{1}{2} g(t)^2 \nabla_{\mathbf{x}} \log \mathbb{P}_t(\mathbf{x}) \right] \mathrm{d}t.
\end{equation}
\subsection{Diffusion Model-based Stegosystem}
\label{sec:diffusion_stegosystem}
In stegosystems, steganographers embed secret messages in carriers like images, which are transmitted over a public channel, while a passive adversary, $\mathcal{A}$, seeks to detect hidden communication. The goal is for steganographers to evade $\mathcal{A}$ and transmit encoded messages. In Diffusion Model-based Generative Steganography (DM-GIS), steganographers encrypt messages using keys and employ reversible bit-to-noise mappings for hiding the message. Leveraging the invertibility of diffusion models \cite{ddim,ddpm,ldm}, steganographers establish an invertible mapping between stego images and secret messages. Some methods \cite{pulsar,stegaddpm,ldstega} use the SDE with shared random seeds for invertibility, while others \cite{diffusion-stego,gaussian-shading,establishing,improved,cross,diffstega} use the PF-ODE. \Cref{fig:framework} illustrates the two DM-GIS frameworks.
\subsection{Definition of Steganographic Security}
\label{sec:definition_security}
Steganographic security can be defined in two ways. Cachin \cite{cachin1998information} defined security based on information theory, quantifying it with the Kullback-Leibler (KL) divergence between the cover distribution $\mathbb{P}_c$ and the stego distribution $\mathbb{P}_s$. A system is $\epsilon$-secure if 
\begin{equation}
    \label{eq:security1}
    D_{KL}(\mathbb{P}_c \parallel \mathbb{P}_s) \leq \epsilon.
\end{equation}
If $\epsilon = 0$, the system is perfectly secure. Hopper \emph{et al.} \cite{hopper2002provably} defined security from a complexity theory perspective. A stegosystem is \((N_q, T_r, \epsilon)\)-secure against a hidden-text attack if
\begin{equation}
    \label{eq:security2}
    \max_{N_q, T_r} \left| \mathbb{P} \left( \mathcal{A}^{O_\mathcal{S}(\mathbf{k}, \mathbf{m})} = 1 \right) - \mathbb{P} \left(\mathcal{A}^{O_\mathcal{C}} = 1 \right) \right| < \epsilon.
\end{equation}
$O_\mathcal{C}$ and $O_\mathcal{S}$ are oracles for the cover and stego distributions. $\mathbf{k}$ is a randomly chosen key and $\mathbf{m}$ is a secret message. The adversary $\mathcal{A}$ is allowed to make at most $N_q$ queries within a runtime of $T_r$.
\section{Theoretical Analysis of DM-GIS Security}
\label{sec:theoretical_analysis_security}
In \Cref{subsec:theorical_analysis}, we provide a steganographic security analysis of Diffusion Model-based Generative Steganography (DM-GIS), uncovering the relationship between DM-GIS security, noise distribution of diffusion models, and message extraction accuracy.
\subsection{Connecting Diffusion Model Noise Distribution to DM-GIS Security}\label{subsec:theorical_analysis}
Previous diffusion models \cite{ddim,scored-generative} have stated that one can theoretically construct an invertible mapping between the latent $\mathbf{x}_T$ and the image $\mathbf{x}_0$ driven by \Cref{eq:ode_solver}. For diffusion models, the generation process can be defined by $\mathbf{x}_0=F(\mathbf{x}_T)$, which can be decomposed into $N_s$ piecewise integrals \cite{dpm,ddim},
\begin{equation}
    \label{eq:pf-ode-de}
    \mathbf{x}_t = \mathbf{x}_{t+1}+\int_{t+1}^{t}\left[ f(\mathbf{x}, t) - \frac{1}{2} g(t)^2 \nabla_{\mathbf{x}} \log \mathbb{P}_t(\mathbf{x}) \right] \mathrm{d}t.
\end{equation}
Let $F^{-1}$ denote the corresponding reverse process; we then present the following theorem:
\begin{theorem}\label{theorem:t0}
Let $\mathbb{Q}_c$ and $\mathbb{Q}_s$ denote the distributions of normal diffusion model noise and noise containing secret messages, respectively. We denote $\mathbb{P}_s$ and $\mathbb{P}_c$ as the stego image distribution and normally generated image distribution, respectively. Theoretically, the following equality holds: $D_{KL}(\mathbb{P}_c\| \mathbb{P}_s) = D_{KL}(\mathbb{Q}_c \| \mathbb{Q}_s).
$
\end{theorem}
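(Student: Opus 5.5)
The plan is to use the invariance of the KL divergence under bijective, measurable (here smooth) transformations. The key structural fact is the one recalled just before the statement. The generation map $F$, obtained by composing the $N_s$ piecewise PF-ODE integrals in \Cref{eq:pf-ode-de}, is a deterministic invertible map from noise $\mathbf{x}_T$ to image $\mathbf{x}_0$, with inverse $F^{-1}$. Both the cover and the stego image are produced by the same $F$; they differ only in the noise fed in. So we can write $\mathbb{P}_c = F_{\#}\mathbb{Q}_c$ and $\mathbb{P}_s = F_{\#}\mathbb{Q}_s$ as pushforwards. Once this is in place, the theorem reduces to showing that KL is invariant under a common pushforward by a bijection.

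First, I will justify that $F$ is a diffeomorphism. Each step of \Cref{eq:pf-ode-de} is the time-$t+1\to t$ flow map of an ODE whose vector field $f(\mathbf{x},t)-\frac12 g(t)^2\nabla_{\mathbf{x}}\log\mathbb{P}_t(\mathbf{x})$ we assume is locally Lipschitz in $\mathbf{x}$ (the score network $\boldsymbol{\epsilon}_\phi$ is smooth). By standard ODE theory (Picard–Lindelöf and smooth dependence on initial data), each step is a $C^1$ bijection with $C^1$ inverse, namely the backward flow. A finite composition of such maps inherits this, so $F$ is a diffeomorphism with inverse $F^{-1}$.

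Second, I will apply the change-of-variables formula. Assuming $\mathbb{Q}_c,\mathbb{Q}_s$ have densities $q_c,q_s$, the image densities are
\begin{equation*}
p_\bullet(\mathbf{x}_0)=q_\bullet\bigl(F^{-1}(\mathbf{x}_0)\bigr)\,\bigl|\det J_{F^{-1}}(\mathbf{x}_0)\bigr|, \qquad \bullet\in\{c,s\}.
\end{equation*}
In the ratio $p_c/p_s$ the Jacobian factors cancel, leaving $q_c(F^{-1}(\mathbf{x}_0))/q_s(F^{-1}(\mathbf{x}_0))$. Substituting $\mathbf{x}_T=F^{-1}(\mathbf{x}_0)$ in $\int p_c\log(p_c/p_s)\,\mathrm{d}\mathbf{x}_0$, the Jacobian is absorbed again, giving exactly $\int q_c\log(q_c/q_s)\,\mathrm{d}\mathbf{x}_T$.

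Third, as a density-free alternative that also covers singular stego noise, I can use the data-processing inequality twice. Pushing forward by $F$ gives $D_{KL}(F_\#\mathbb{Q}_c\|F_\#\mathbb{Q}_s)\le D_{KL}(\mathbb{Q}_c\|\mathbb{Q}_s)$. Pushing forward by $F^{-1}$ gives the reverse inequality, and the two together yield equality. This also handles the case where either side is $+\infty$: for example, binary-valued noise is not absolutely continuous with respect to the Gaussian, and both divergences are then infinite together.

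For the intermediate-noise framework of \Cref{fig:framework}(b), I would apply the same argument with $F$ replaced by the partial flow from $\mathbf{x}_t$ to $\mathbf{x}_0$.

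The main obstacle is justifying exact invertibility, which is why the statement says "theoretically". In practice $F$ is a discretized solver (e.g.\ DDIM) and the SDE-based methods with shared seeds are not deterministic. I would therefore state the result for the exact PF-ODE flow, or for any deterministic solver step that is injective; a DDIM step is invertible for small step sizes when the network is Lipschitz. SDE methods can be handled by conditioning on the shared seed, which makes the map deterministic again. I would explicitly assume that discretization and inversion error are negligible, rather than prove it.
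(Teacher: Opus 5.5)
Your argument is correct, and your third paragraph is in substance the paper's proof. The paper works with sums over a (tacitly countable) image space. It groups images by the noise they invert to and applies the log-sum inequality on each fibre, which gives $D_{KL}(\mathbb{P}_c\|\mathbb{P}_s)\ge D_{KL}(\mathbb{Q}_c\|\mathbb{Q}_s)$. It then says the same argument run through $F$ gives the reverse inequality. That is the data-processing inequality proved by hand for $F^{-1}$ and for $F$, i.e.\ your two pushforwards.

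Your primary route is genuinely different. Instead of squeezing between two inequalities, you get the pointwise identity $p_c/p_s=(q_c/q_s)\circ F^{-1}$, because the Jacobian cancels in the likelihood ratio. This fits the actual continuous Gaussian setting better than the paper's sums. It also shows at once that the whole likelihood-ratio statistic is transported, hence every $f$-divergence (e.g.\ total variation) is preserved. The price is the regularity in your first step. You need densities and a genuine diffeomorphism, and local Lipschitz continuity alone does not give the latter. You also need a growth bound (e.g.\ linear growth of the vector field) so the flow is defined on all of $\mathbb{R}^d$ over the whole time interval.

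The DPI version needs only a measurable bijection with measurable inverse and handles $+\infty$, so it is the cleaner primary argument. It also shows that the two directions rest on different hypotheses:
\begin{itemize}
\item $D_{KL}(\mathbb{P}_c\|\mathbb{P}_s)\le D_{KL}(\mathbb{Q}_c\|\mathbb{Q}_s)$ holds for any common generator, even a shared Markov kernel.
\item The reverse inequality is where injectivity of $F$ is actually used. It is the one behind the paper's conclusion that disturbing the noise distribution must leak into the images.
\end{itemize}

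You should drop the remark that conditioning on the shared seed extends the equality to SDE-based schemes. The adversary sees the mixture over the unknown seed. By joint convexity, the KL between mixtures is only bounded above by the seed-averaged conditional KL. For example, a scheme that, given the seed, picks among finitely many seeded candidates has infinite conditional divergence yet can have marginal divergence zero. If you instead take the initial noise together with all injected step noises as the latent, the generator is deterministic but many-to-one. Then only the inequality $D_{KL}(\mathbb{P}_c\|\mathbb{P}_s)\le D_{KL}(\mathbb{Q}_c\|\mathbb{Q}_s)$ survives. The paper's proof also assumes a deterministic invertible $F$, so this does not affect your proof of the stated theorem.
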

\begin{proof}
    Assuming $\mathcal{X}$ denotes the whole image space, we have
\begin{equation}
            D_{KL}(\mathbb{P}_c(\mathbf{x})||\mathbb{P}_s(\mathbf{x}))=\sum_{\mathbf{x}\in\mathcal{X}}\mathbb{P}_c(\mathbf{x})\log\frac{\mathbb{P}_c(\mathbf{x})}{\mathbb{P}_s(\mathbf{x})}
            =\sum_{\mathbf{g}\in{F^{-1}(\mathcal{X})}}\sum_{\mathbf{x}\in F(\mathbf{g})}\mathbb{P}_{c}(\mathbf{x})\log\frac{\mathbb{P}_c(\mathbf{x})}{\mathbb{P}_s(\mathbf{x})}.
\end{equation}
Let $F(\mathbf{g})=\{\mathbf{x}_1,\cdots,\mathbf{x}_k\}$, we have
\begin{equation}
    \begin{aligned}
                &\sum_{\mathbf{x}\in F(\mathbf{g})}\mathbb{P}_c(\mathbf{x})\log\frac{\mathbb{P}_c(\mathbf{x})}{\mathbb{P}_s(\mathbf{x})}=\sum_{i=1}^{k}\mathbb{P}_c(\mathbf{x}_i)\log\frac{\mathbb{P}_c(\mathbf{x}_i)}{\mathbb{P}_s(\mathbf{x}_i)}.
    \end{aligned} 
\end{equation}
According to the log sum inequality, we can easily get
\begin{equation}
    \sum_{i=1}\mathbb{P}_c(\mathbf{x}_i)\log\frac{\mathbb{P}_c(\mathbf{x}_i)}{\mathbb{P}_s(\mathbf{x}_i)}\geq\sum_{i=1}^{k}\mathbb{P}_c(\mathbf{x}_i)\log\frac{\sum_{j=1}^{k}\mathbb{P}_c(\mathbf{x}_j)}{\sum_{j=1}^{k}\mathbb{P}_s(\mathbf{x}_j)}.
\end{equation}
As $\mathbf{g}\in F^{-1}(\mathcal{X})$, $\mathbb{Q}_c(\mathbf{g})=\sum_{\mathbf{x}\in F^{-1}(\mathbf{g})}\mathbb{P}_c(\mathbf{x})$, and similarly, $\mathbb{Q}_s(\mathbf{g})=\sum_{\mathbf{x}\in F^{-1}(\mathbf{g})}\mathbb{P}_s(\mathbf{x})$. Accordingly, we have
\begin{equation}
        D_{KL}(\mathbb{P}_c||\mathbb{P}_s)\geq \\
        \sum_{\mathbf{g}\in{F^{-1}(\mathcal{X})}}\sum_{i=1}^{k}\mathbb{P}_c(\mathbf{x}_i)\log\frac{\sum_{j=1}^{k}\mathbb{P}_c(\mathbf{x}_j)}{\sum_{j=1}^{k}\mathbb{P}_s(\mathbf{x}_j)}
        =\sum_{\mathbf{g}\in{F^{-1}(\mathcal{X})}}\mathbb{Q}_c(\mathbf{g})\log\frac{\mathbb{Q}_c(\mathbf{g})}{\mathbb{Q}_s(\mathbf{g})}=D_{KL}(\mathbb{Q}_c||\mathbb{Q}_s).
\label{proof:kl-f}
\end{equation}
As the denoising function $F$ is deterministic and invertible, we can use the same method described above to obtain 
\begin{equation}
    D_{KL}(\mathbb{Q}_c||\mathbb{Q}_s)\geq D_{KL}(\mathbb{P}_c||\mathbb{P}_s).
    \label{proof:kl-b}
\end{equation}
  Combining \Cref{proof:kl-f} and \Cref{proof:kl-b}, we have $D_{KL}(\mathbb{P}_c\|\mathbb{P}_s)=D_{KL}(\mathbb{Q}_c\|\mathbb{Q}_s)$ hold. The proof is complete.
\end{proof}
\Cref{theorem:t0} shows that, the KL divergence between the distributions of normal and stego images is theoretically preserved in the noise space of diffusion models. This means that any steganographic operation disrupting the original noise distribution of diffusion models will inevitably introduce a distributional divergence between stego and normally generated images. According to the steganographic security definition in \Cref{eq:security1}, such an operation increases $D_{KL}(\mathbb{P}_c \parallel \mathbb{P}_s)$, thereby reducing security. In other words, DM-GIS can be theoretically undetectable only if their message-mapping schemes do not alter the diffusion model's natural noise distribution.
\begin{figure}
  \centering
\includegraphics[width=.5\textwidth]{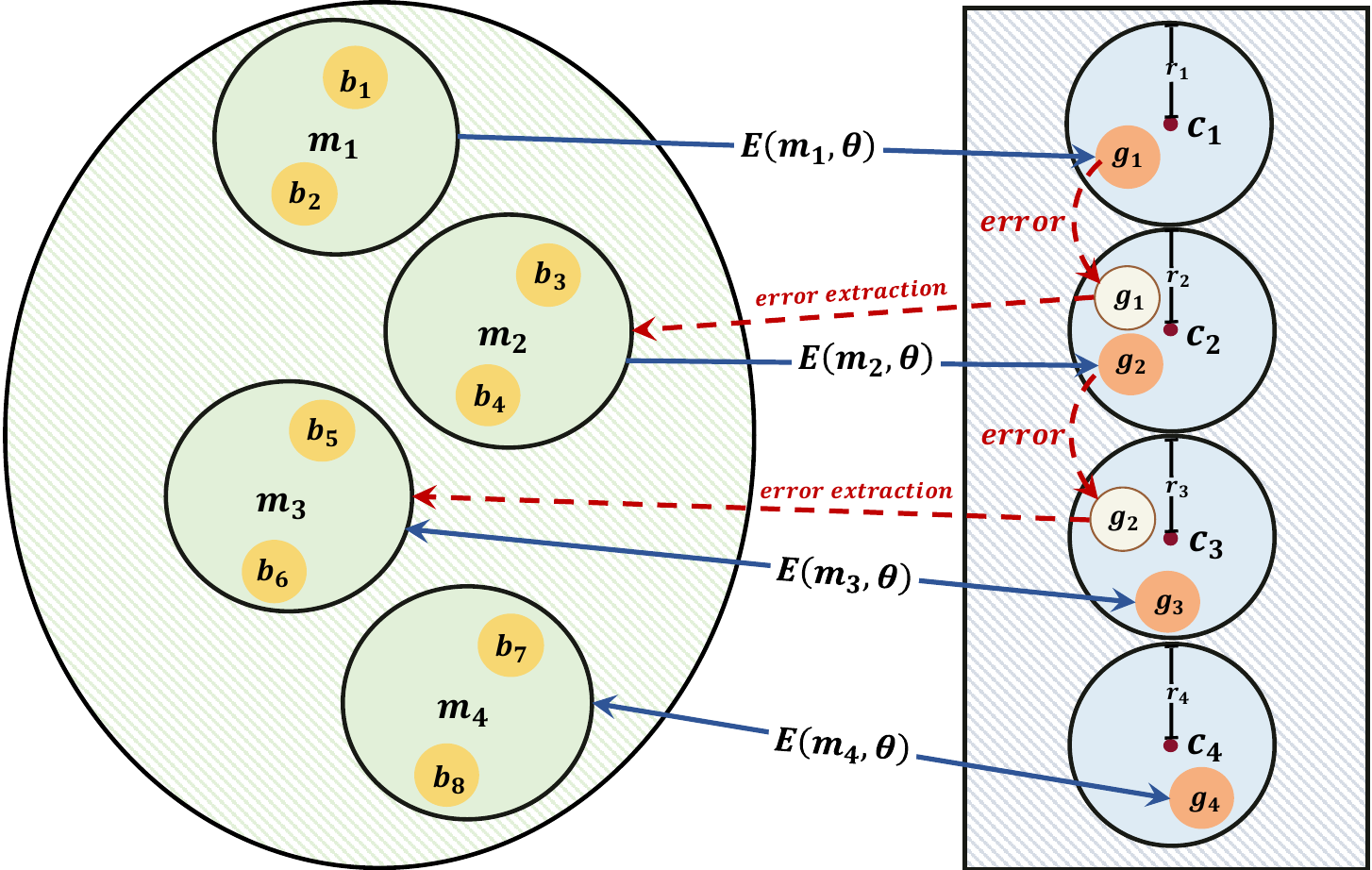}
   \caption{An illustrative example of $E(\mathbf{m}, \theta) = \mathbf{g}$ with $l = 2$, showcasing an invertible bit-to-noise mapping for DM-GIS while highlighting a scenario where extraction errors arise.}
   \label{fig:proof}
\end{figure}
\subsection{Connecting Message Extraction Accuracy to DM-GIS Security}
\label{subsubsec:security-ea}
Beyond steganographic security, message extraction accuracy is another critical metric considered by steganographers. We explore the relationship between DM-GIS security and message extraction accuracy as below. 

Given an encrypted secret bit sequence $\mathbf{b}=b_1b_2\cdots b_n$, the sequence $\mathbf{b}$ can be further partitioned into multiple subsequences of length $l$ ($l\geq1$)\cite{gaussian-shading}. These subsequences collectively form an integer sequence $\mathbf{m}=m_1m_2\cdots m_k$, where $m_i\in \{0,1,\cdots,2^l-1\}$. 
\begin{proposition}\label{prop:one}
Let $E(\mathbf{m}; \theta) = \mathbf{g}$ denote the steganographic encoder employed in DM-GIS, where $\mathbf{g} = (g_1, g_2, \dots, g_k) \sim \mathbb{Q}_s$ and $\theta$ denotes an adjustable parameter. The security of DM-GIS is quantified by $D_{KL}(\mathbb{Q}_c \,\|\, \mathbb{Q}_s) = \epsilon$. If $\theta$ is modified to $\theta'$ in order to enhance message extraction accuracy, resulting in $E(\mathbf{m}; \theta') = \mathbf{g}'$, then $D_{KL}(\mathbb{Q}_c \,\|\, \mathbb{Q}_s) > \epsilon$.
\end{proposition}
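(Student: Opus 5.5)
The idea is to formalize the bit-to-noise encoder as a partition of the noise space, as in \Cref{fig:proof}, and then push the extraction-accuracy improvement through the data-processing argument already used in \Cref{theorem:t0}. Since the proposition is informal, I first fix a model. Take the typical encoder of \cite{gaussian-shading}: each component $g_i$ is sampled from $\mathcal{N}(0,1)$ restricted to the $m_i$-th of $2^l$ quantile intervals $I_0,\dots,I_{2^l-1}$, where each interval has mass $2^{-l}$. The parameter $\theta$ controls how the sample is placed inside its interval; for instance, $\theta$ may shrink the sampling region to a sub-interval $I_j(\theta)\subseteq I_j$ kept away from the boundaries. Under encryption the $m_i$ are i.i.d.\ uniform. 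With $\theta$ set to ``full interval, truncated-Gaussian density'' we recover $\mathbb{Q}_s=\mathbb{Q}_c$, the baseline. Extraction errors arise exactly when inversion noise (the imperfect $F^{-1}$ together with discretization) moves $g_i$ across an interval boundary.

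\textbf{Step 1 (extraction accuracy as a function of $\theta$).} Model the recovered noise as $\hat g_i=g_i+\eta_i$ with symmetric unimodal $\eta_i$. Then the per-symbol accuracy is $\mathbb{P}(\hat g_i\in I_{m_i})$. This quantity increases only if the law of $g_i$ given $m_i$ moves probability mass away from the boundaries of $I_{m_i}$. Concretely, changing $\theta$ to $\theta'$ to raise accuracy means that, for some $j$, the conditional density $q_{\theta'}(\cdot\mid m=j)$ differs from $\mathcal{N}(0,1)|_{I_j}$ on a set of positive measure.

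\textbf{Step 2 (KL strictly increases).} Write $\mathbb{Q}_s^{\theta}=\prod_i q_\theta$ with marginal $q_\theta(g)=\sum_j 2^{-l}q_\theta(g\mid j)$. Because the intervals $I_j$ are disjoint, $q_\theta(g)=2^{-l}q_\theta(g\mid j(g))$. It follows that
\begin{equation*}
D_{KL}(\mathbb{Q}_c\|\mathbb{Q}_s^{\theta})=k\sum_j 2^{-l}D_{KL}\bigl(\mathcal{N}|_{I_j}\,\big\|\,q_\theta(\cdot\mid j)\bigr).
\end{equation*}
Each term is nonnegative, and a term vanishes if and only if the two densities agree almost everywhere. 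So the divergence equals its baseline value $\epsilon$ only at the ideal $\theta$. Any accuracy-improving $\theta'$ moves mass toward interval interiors, which by Step 1 makes some term strictly positive.

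\textbf{Step 3 (handling general $\epsilon$).} For $\epsilon>0$ I will need monotonicity along a one-parameter family. I would parametrize the concentration by $\theta\in[0,1]$, for example $q_\theta(\cdot\mid j)\propto \mathcal{N}\cdot\mathbf{1}_{I_j(\theta)}$ with nested sub-intervals $I_j(\theta)$ that shrink as $\theta$ grows. A direct computation then gives $D_{KL}(\mathcal{N}|_{I_j}\|q_\theta(\cdot\mid j))=-\log \mathcal{N}(I_j(\theta))/\mathcal{N}(I_j)$, which is strictly increasing as $I_j(\theta)$ shrinks. Over the same family, accuracy is monotone increasing in $\theta$. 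Hence $\theta'$ chosen to increase accuracy gives a strictly larger divergence than $\epsilon$. By \Cref{theorem:t0}, this larger divergence transfers to $D_{KL}(\mathbb{P}_c\|\mathbb{P}_s)$.

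The main obstacle is Step 1 together with its use in Step 3. ``Enhancing accuracy'' must be tied to a concrete, monotone change of the encoder, and the proposition is false for arbitrary reparametrizations. I will therefore state the result for such a nested concentration family and remark that it covers the bit-projection variants MN/MB/MC of \cite{diffusion-stego} and the truncated sampling of \cite{gaussian-shading}.
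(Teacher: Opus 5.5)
Your route differs from the paper's. The paper models the encoder as centers $c_j$ with radius-$r$ decoding regions and notes that better accuracy needs larger separation or smaller $r$. It then simply asserts, via \Cref{theorem:t0}, that the altered law of $\mathbf{g}$ pushes the divergence above $\epsilon$, without showing that the change increases the divergence rather than merely alters it. Your Steps 1--2 (the decomposition over the disjoint quantile intervals, with strict positivity unless the conditionals coincide) make the baseline case $\epsilon=0$ rigorous, and you correctly use \Cref{theorem:t0} only to transfer the conclusion to image space.

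\textbf{The gap is in Step 3.} Take the hard-truncation family $q_\theta(\cdot\mid j)\propto\mathcal{N}\cdot\mathbf{1}_{I_j(\theta)}$ with $I_j(\theta)\subsetneq I_j$. The cover law $\mathcal{N}|_{I_j}$ puts positive mass on the removed boundary strips $I_j\setminus I_j(\theta)$, where $q_\theta(\cdot\mid j)$ vanishes. Hence $\mathbb{Q}_c\not\ll\mathbb{Q}_s^{\theta}$ and $D_{KL}(\mathcal{N}|_{I_j}\,\|\,q_\theta(\cdot\mid j))=+\infty$. The value $-\log(\mathcal{N}(I_j(\theta))/\mathcal{N}(I_j))$ you computed is the reverse divergence $D_{KL}(q_\theta(\cdot\mid j)\,\|\,\mathcal{N}|_{I_j})$. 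It therefore corresponds to $D_{KL}(\mathbb{Q}_s\|\mathbb{Q}_c)$, not to the cover-first divergence in the statement. In the stated direction, every truncated baseline already has $\epsilon=+\infty$, so further shrinking cannot give a strict increase. Your family thus only covers the trivial case $\epsilon=0$.

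There are two ways to repair this.
\begin{itemize}
\item Prove the statement for $D_{KL}(\mathbb{Q}_s\|\mathbb{Q}_c)$ instead. Your Step 2 decomposition holds verbatim in that direction, and the invertibility argument behind \Cref{theorem:t0} is symmetric in its two arguments.
\item Keep the cover-first direction but use a full-support family, e.g.\ $q_\lambda(\cdot\mid j)=(1-\lambda)\mathcal{N}|_{I_j}+\lambda r_j$, where $r_j$ is supported in $I_j$ and has strictly higher decoding accuracy than $\mathcal{N}|_{I_j}$. Then $h_j(\lambda)=D_{KL}(\mathcal{N}|_{I_j}\,\|\,q_\lambda(\cdot\mid j))\le-\log(1-\lambda)$ is finite for $\lambda<1$. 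It is convex in $\lambda$ (KL is convex in its second argument), zero at $\lambda=0$ and positive for $\lambda>0$, hence strictly increasing on $[0,1)$. Accuracy is affine and increasing in $\lambda$, so an accuracy gain forces $\lambda'>\lambda$ and hence a strictly larger divergence for every finite $\epsilon$.
\end{itemize}

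Separately, ``nested and shrinking'' sub-intervals do not by themselves make accuracy monotone in $\theta$. You need $I_j(\theta)$ to be superlevel sets of the per-point decoding probability $g\mapsto\mathbb{P}(g+\eta\in I_j)$. Monotonicity then follows because $\mathbb{E}[X\mid X\ge t]$ is nondecreasing in $t$.
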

\begin{proof}
    The proof can be formulated within the framework of coding theory. Specifically, the steganographic encoder can be expressed as $E(m_i, \theta) = g_i = c_j + \eta_j$, where $j \in \{1, \dots, 2^l - 1\}$, which defines a Hamming ball centered at $c_j$ with radius $r$. To extract secret bits, $E^{-1}$ must be at least surjective, but not necessarily injective. \Cref{fig:proof} provides an illustrative example: all elements $g_i$ are projected into the blue disjoint regions, ensuring that any $g_i$ satisfying $\|\eta_j\| \leq r$ can be decoded into a unique bit sequence. However, when perturbations occur, such as when $g_1$ falls within the region centered at $c_2$, a message extraction error occurs. To reduce the error, two strategies can be employed: increasing the distance between the center points $c_j$, or reducing $r$ of each blue region. Formally speaking, an alternative encoder parameter $\theta'$ can be chosen such that $E(\mathbf{m}, \theta') = \mathbf{g}'$ improves message extraction. However, according to \Cref{theorem:t0}, this inevitably alters the distribution of $\mathbf{g}$, leading to $D_{KL}(\mathbb{Q}_c \| \mathbb{Q}_s) > \epsilon$. This completes the proof.
\end{proof}

According to \Cref{theorem:t0}, \Cref{prop:one} implies that, for a given steganographic encoder $E(\cdot;\theta)$, any adjustment of $\theta$ that improves message extraction accuracy will inevitably compromise the security of DM-GIS. In other words, once $E(\cdot;\theta)$ is fixed, DM-GIS cannot simultaneously achieve both high extraction accuracy and strong security. This observation makes \Cref{prop:one} particularly useful for analyzing the security of adjustable DM-GIS methods \cite{diffstega,ldstega}. To overcome this inherent trade-off, a more refined design of the encoder $E$ is therefore required.
\begin{figure}
  \centering
\includegraphics[width=\textwidth]{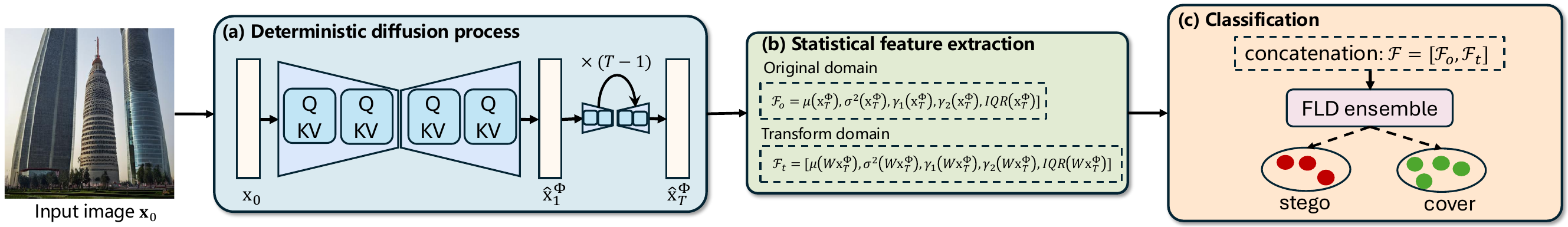}
   \caption{\textbf{Detection pipeline of the proposed NS-DSer.} The input image $\mathbf{x}_0$ first undergoes a deterministic condition-free diffusion process which is implemented by the ODE solver $\Phi$, to obtain the noise $\hat{\mathbf{x}}_{T}^{\Phi}$. Five statistical features of $\hat{\mathbf{x}}_{T}^{\Phi}$—mean, variance, skewness, kurtosis, and IQR—are extracted to form feature $\mathcal{F}_o$. Additionally, a transform-domain feature $\mathcal{F}_t$ is extracted from $\hat{\mathbf{x}}_{T}^{\Phi}$. The features $\mathcal{F}_0$ and $\mathcal{F}_t$ are concatenated and fed into a binary classifier.}
   \label{fig:ns_dser_framework}
\end{figure}
\section{Noise Space-based Diffusion Steganalyzer}\label{sec:nsdser}
\Cref{theorem:t0} demonstrates that the KL divergence between cover and stego image distributions is preserved in the noise space of diffusion models. This implies steganalysis can theoretically be performed in the noise space. This procedure is far more tractable than steganalysis in the image space because it can be reduced to the problem of distinguishing stego noise from Gaussian noise. However, in practice, $F$ is only approximately invertible due to numerical errors in \Cref{eq:pf-ode-de} and image quantization. Furthermore, \Cref{theorem:t0} holds only under the assumption that $F$ is known, which is impractical in real-world detection scenarios. That said, recent work \cite{gaussian-shading} has empirically demonstrated that using samplers and condition guidance different from those in stego image generation still achieves high message extraction accuracy, suggesting the feasibility of noise space-based steganalysis for DM-GIS methods. Motivated by this, we propose NS-DSer, which comprises three key steps.

\noindent\textbf{Deterministic Diffusion Process}. We select the noise space corresponding to the start timestep \( T \) of the denoising process for steganalysis. Specifically, given an image \( \mathbf{x}_0 \), we first iteratively diffuse it to the noise \( \hat{\mathbf{x}}_T^{\Phi} \) using an ODE solver \( \Phi \) with \( N_s \) sampling steps:
\begin{equation}
    \label{eq:ode_solver}
    \hat{\mathbf{x}}_{t+1}^{\Phi} = \mathbf{x}_t + \int_{t}^{t+1} \left( f(t)\mathbf{x}_t + \frac{g^2(t)}{2\sigma_t} \boldsymbol{\epsilon}_{\phi}(\mathbf{x}_t,t,\mathbf{c}) \right) \mathrm{d}t 
    \approx \Phi(\mathbf{x}_t, t, t+1, \mathbf{c};\phi),
\end{equation}
where $\mathbf{c}$ is a given condition (\emph{e.g.}, text). In this paper, to balance computational efficiency and numerical accuracy, we adopt $2^{\text{nd}}$ order ODE solvers, such as DPM-Solver-2 \cite{dpm} and Heun solver \cite{karras}, to implement $\Phi(\cdot,\cdot,\cdot,\cdot;\phi)$. 

When Classifier-free Guidance (CFG) \cite{cfg} is applied, $\boldsymbol{\epsilon}_{\phi}(\mathbf{x}_t, t,\mathbf{c})$ in \Cref{eq:ode_solver} is converted into 
\begin{equation}
\tilde{\boldsymbol{\epsilon}}_{\phi}(\mathbf{x}_t, t,\mathbf{c})=\boldsymbol{\epsilon}_{\phi}(\mathbf{x}_t, t,\emptyset) + \omega(\boldsymbol{\epsilon}_{\phi}(\mathbf{x}_t, t,\mathbf{c})+\boldsymbol{\epsilon}_{\phi}(\mathbf{x}_t, t,\emptyset)),   
\end{equation}
where $\omega$ denotes the CFG scale. In practical detection scenarios, the specific conditions $\mathbf{c}$ used by steganographers are often unknown. Therefore, we can leverage off-the-shelf vision language models (VLM), such as ChatGPT \cite{gpt-4o}, to extract $\mathbf{c}$ from images by simply prompting VLMs with ``Please briefly describe this image.''. Alternatively, we may choose not to apply $\mathbf{c}$ in the deterministic diffusion process. A comparison of the two approaches is presented in \Cref{subsec:ablation_study}.

\noindent\textbf{Statistical Feature Extraction}. In the target noise space, $\hat{\mathbf{x}}_T^{\Phi}$ approximates random noise, meaning it contains little to no semantic information from the original image $\mathbf{x}_0$. Therefore, the deep network architectures commonly used in previous image steganalysis methods are not well-suited for this setting. Additionally, we observe that directly applying these networks in the noise space can easily lead to overfitting, resulting in unstable training. To address these challenges, we propose using statistical features of $\hat{\mathbf{x}}_T^{\Phi}$ for steganalysis. Specifically, we find that $\hat{\mathbf{x}}_T^{\Phi}$ approximates standard Gaussian noise when the selected diffusion model $\boldsymbol{\phi}$ is variance-preserving \cite{scored-generative,karras}. Therefore, distinguishing between cover and stego images can be viewed as differentiating the stego noise distribution from the known cover noise distribution. Based on this observation, we first flatten $\hat{\mathbf{x}}_T^{\Phi}$ and extract five statistical features—mean, variance, skewness, kurtosis, and interquartile range (IQR)—to represent $\hat{\mathbf{x}}_T^{\Phi}$ as follows:
\begin{equation}
    \label{eq:ori_fea_ext}
    \mathcal{F}_o = [\mu(\hat{\mathbf{x}}_T^{\Phi}),\sigma^2(\hat{\mathbf{x}}_T^{\Phi}),\gamma_1(\hat{\mathbf{x}}_T^{\Phi}),\gamma_2(\hat{\mathbf{x}}_T^{\Phi}),IQR(\hat{\mathbf{x}}_T^{\Phi})]
\end{equation}

Similar to traditional image steganography, where secret messages are embedded in the frequency domain, some DM-GIS methods \cite{improved,establishing} hide secret bits in the transform domain of $\mathbf{x}_T$. In this paper, we select the Discrete Cosine Transform and apply it to each channel of $\hat{\mathbf{x}}_T^\Phi$, denoted simply as $W\hat{\mathbf{x}}_T^\Phi$. To capture the differences in statistical distributions between cover and stego examples in the transform domain, we similarly extract five statistical features with
\begin{equation}
    \label{eq:td_fea_ext}
    \mathcal{F}_t = [\mu(W\hat{\mathbf{x}}_{T}^{\Phi}),\sigma^2(W\hat{\mathbf{x}}_{T}^{\Phi}),\gamma_1(W\hat{\mathbf{x}}_{T}^{\Phi}),\gamma_2(W\hat{\mathbf{x}}_{T}^{\Phi}), 
    IQR(W\hat{\mathbf{x}}_{T}^{\Phi})].
\end{equation}
The final feature is represented by $\mathcal{F} = [\mathcal{F}_o, \mathcal{F}_t]$.

\noindent\textbf{Classification} For classification, we use the Fisher Linear Discriminant (FLD) ensemble \cite{kodovsky2011ensemble} due to its excellent balance between performance and computational efficiency. The training procedure for NS-DSer is provided in \Cref{alg:ns-dser} and \Cref{fig:ns_dser_framework}.
\begin{algorithm}[!t]
  \caption{Training Procedure of NS-DSer}
  \raggedright
\textbf{Input}: cover images $\mathbf{x}_c$, stego images $\mathbf{x}_s$, sampling steps $N_s$, guidance scale $\omega$, ODE solver $\Phi(\cdot,\cdot,\cdot,\cdot;\phi,\omega)$. \\
\textbf{Output}: The optimal FLD ensemble.
  \begin{algorithmic}[1]
        \FOR{$i = 1$ to $N_s$}
             \STATE $\hat{\mathbf{x}}_{c,t+1}^{\Phi} = \Phi(\hat{\mathbf{x}}_{c,t}^{\Phi}, t, t+1, \varnothing;\phi,\omega)$
             \STATE $\hat{\mathbf{x}}_{s,t+1}^{\Phi} = \Phi(\hat{\mathbf{x}}_{s,t}^{\Phi}, t, t+1, \varnothing;\phi,\omega)$
        \ENDFOR
        \STATE Flatten $\hat{\mathbf{x}}_{c,T}^{\Phi}$ and $\hat{\mathbf{x}}_{s,T}^{\Phi}$ 
        \STATE Extract feature $\mathcal{F}_o^c$ of $\hat{\mathbf{x}}_{c,T}^{\Phi}$ with \Cref{eq:ori_fea_ext}
        \STATE Extract feature $\mathcal{F}_o^s$ of $\hat{\mathbf{x}}_{s,T}^{\Phi}$ with \Cref{eq:ori_fea_ext}
        \STATE Extract feature $\mathcal{F}_t^c$ of $W_q\hat{\mathbf{x}}_{c,T}^{\Phi}$ with \Cref{eq:td_fea_ext}
        \STATE Extract feature $\mathcal{F}_t^s$ of $W_q\hat{\mathbf{x}}_{s,T}^{\Phi}$ with \Cref{eq:td_fea_ext}
        \STATE $\mathcal{F}_c=[\mathcal{F}_o^c, \mathcal{F}_t^c]$ and $\mathcal{F}_s=[\mathcal{F}_o^s, \mathcal{F}_t^s]$
        \STATE Train FLD ensemble on features $\mathcal{F}_c$ and $\mathcal{F}_s$ 
\STATE Return the optimal FLD ensemble
  \end{algorithmic}
  \label{alg:ns-dser}
  \end{algorithm}

\section{Experiments}\label{sec:exp_results}
  \begin{table}[!t]
      \centering
      \caption{Setups for \emph{Scenarios \#1\(\sim\)\#4}. $r_g$ and $r_n$ represent the ratios of generated and natural images in the cover (or stego) image dataset, respectively. SD1.5, SD2.1, and DS7 are abbreviations of Stable Diffusion 1.5, Stable Diffusion 2.1, and Dreamshaper 7, respectively. $1^{\text{st}}$ order sampler adopts 50-step sampling, with 20 steps for the $2^{\text{nd}}$ order. $\dagger$ denotes the setups for CRoSS.}
      \label{tab:setups}
      \begin{tabular}{ccccccc}
      \toprule
       & & $r_g$ & $r_n$ & Backbone  & Sampling steps & $\omega$  \\
      \midrule
      \multirow{2}{*}{\emph{\#1}}
      &\emph{cover} & 100\% & 0\% & SD1.5 & 20, 50 & 7.5, 1$^{\dagger}$ \\
      &\emph{stego} & 100\% &0\% & SD1.5 &  20, 50 & 7.5, 1$^{\dagger}$ \\
      \midrule
      \multirow{2}{*}{\emph{\#2}}
      &\emph{cover} & 50\% & 50\% & SD1.5 & 20, 50 &7.5, 1$^{\dagger}$  \\
      &\emph{stego} & 100\% & 0\% & SD1.5 & 20, 50 &7.5, 1$^{\dagger}$ \\
      \midrule
      \emph{\#3}&\emph{cover} & 50\% & 50\% & SD1.5,SD2.1,DS7 & $[15,\!25]$,$[45,\!55]$ &$[4.5,\!10.5]$,$[1\!,\!9]^\dagger$  \\
      \emph{\#4}&\emph{stego} & 100\% & 0\% & SD1.5,SD2.1,DS7 &  $[15,\!25]$,$[45,\!55]$ &$[4.5,\!10.5]$,$[1\!,\!9]^\dagger$\\
      \bottomrule
      \end{tabular}
  \end{table}

   We reevaluate the security of nine existing DM-GIS methods—MN \cite{diffusion-stego}, MC \cite{diffusion-stego}, MB \cite{diffusion-stego}, Gaussian-Shading (G-S) \cite{gaussian-shading}, StegaDDPM \cite{stegaddpm}, LDStega \cite{ldstega}, GSD \cite{improved}, mas-GRDH \cite{establishing}, and CRoSS \cite{cross}—via five steganalyzers: XuNet \cite{xunet}, SRNet \cite{srnet}, SuaStegBet \cite{you2020siamese}, UCNet \cite{ucnet}, and NS-DSer.
  
  \noindent \textbf{DM-GIS Setups}. For MN, MC, MB, G-S, GSD, and mas-GRDH, we use the 2$^\text{nd}$ order ODE solver \cite{karras} for stego image generation, and the 1$^\text{st}$ sampler \cite{ddim} for the others. The truncated interval of LDStega is set to 0.3. For other methods, the default settings are adopted unless otherwise specified.
  
  \noindent \textbf{Steganalyzer Setups}. The first four steganalyzers are each trained on 6,000 cover-stego image pairs and tested on 1,000 pairs over 100 epochs. In contrast, owing to the limited 10-dimensional feature space extracted by NS-DSer, it is trained on a smaller dataset of 1,800 cover-stego pairs and tested on 200 pairs. All training and testing images are generated based on prompts from the Flickr8K dataset \footnote{https://www.kaggle.com/datasets/adityajn105/flickr8k}. For NS-DSer, given that virtually all existing DM-GIS methods leverage U-Net-based diffusion models for steganography, we also adopt this architecture for feature extraction. Unless otherwise specified, we adopt Stable Diffusion 2.1 as the base generative model and employ 20-step Heun sampling \cite{karras} without prompt guidance. NS-DSer is implemented in both PyTorch and MATLAB, with all experiments conducted using 8 A100 GPUs and 8 L40 GPUs.
  \subsection{Security and Detection Evaluation}
  To comprehensively evaluate the security of DM-GIS methods and compare the performance of current steganalyzers, we set up the following four detection scenarios with increasing difficulty levels.
  
  \noindent \textbf{\emph{Scenario \#1}}. In this scenario, steganographers utilize the same DM-GIS method for covert communication. Cover images are generated from the standard normal distribution, employing the identical set of sampling configurations as those used in stego image generation. Specific parameters are detailed in \Cref{tab:setups}. The passive adversary \(\mathcal{A}\) performs stego image detection using each steganalyzer.
  
  \noindent \textbf{\emph{Scenario \#2}}.
  Here, steganographers also adopt the same DM-GIS method for covert communication. However, the cover image set is expanded to include not only generated images but also natural images sourced from the Flickr8K dataset, with specific details provided in \Cref{tab:setups}.
  
  \noindent \textbf{\emph{Scenario \#3}}.
  In this scenario, steganographers continue to use the same DM-GIS method for covert communication, with cover images encompassing both generated and natural images. The key distinction lies in the fact that stego images and generated cover images are generated using different generative backbones, sampling steps, and CFG scales \(\omega\), as outlined in \Cref{tab:setups}.
  
  \noindent \textbf{\emph{Scenario \#4}}.
  In this scenario, steganographers are permitted to employ different DM-GIS methods for covert communication. Cover images include both generated and natural images, while stego images and generated cover images differ in generative backbones, sampling steps, and CFG scales \(\omega\), as shown in \Cref{tab:setups}. Specifically, we assume steganographers utilize four DM-GIS methods for stego image generation: MN, MC, MB, and GSD.
  
  \begin{table*}[!t]
    \centering
    \caption{Detection accuracy (\%) and overal evaluation of each steganalyzer on nine DM-GIS methods across \emph{Scenarios \#1$\sim$\#3}. }
    \label{tab:s1-3}
    \setlength{\tabcolsep}{4pt}
    \begin{tabular}{clccccccccc|c}
    \toprule
    &\emph{Scenario \#1} & MN & MC  & MB & G-S & StegaDDPM  & LDStega & GSD  & mas-GRDH & CRoSS & Overall \\
    \midrule
    \multirow{5}{*}{\emph{\#1}}
    &XuNet          & 49.17 & 52.25 & 58.86  & 51.38 & 50.67 & 89.41 &50.79  & 50.88 & 83.63 &0.8255\\
    &SRNet          & 49.71 & 80.95 & 97.29 & 50.21 & 51.92 & 98.75 &49.25  & 51.88 & 88.40 &0.9039\\
    &SiaStegNet     & 50.33 & 91.52 & 98.71 & 50.75 & 50.96 & 99.90 &50.38  & 50.13 & 93.20 &0.9283\\
    &UCNet          & 49.67 & 92.90 & 99.60 & 51.50 & 50.54 & 99.95 & 49.67 & 50.70 & \textbf{93.65} &0.9281\\
    \rowcolor{gray!30}
    &NS-DSer (ours) & 52.21  & \textbf{99.78}  & \textbf{99.99}  & 50.95 & 50.05 & \textbf{99.99} &\textbf{99.58}  & 51.13  & 85.70&\textbf{0.9742}\\
    \bottomrule
    \multirow{5}{*}{\emph{\#2}}
    &XuNet               & 59.68 & 61.65 & 70.21 & 60.18  & 60.35 &  91.25 & 60.21 & 60.19 & 81.13 &0.7635\\
    &SRNet               & 70.28 & 86.95 & 97.90 & 70.80  & 70.25 &  99.96 & 70.25 & 70.00 & 87.50 &0.7392\\
    &SiaStegNet          & 74.10 & 91.80 & 98.55 & 74.20  & 73.95 &  99.95 & 73.00 & 74.00 & 91.45 &0.7141\\
    &UCNet               & 73.10 & 90.55 & 99.15 & 74.05  & 73.50 &  99.90 & 73.80 & 74.10 & \textbf{93.50} &0.7200\\
    \rowcolor{gray!30}
    &NS-DSer (ours)      & 62.42 &  \textbf{99.73}&  \textbf{99.99} & 62.27& 62.83& \textbf{99.99}   & \textbf{99.60} &  62.98   & 90.28 & \textbf{0.8633}\\
    \bottomrule
    \multirow{5}{*}{\emph{\#3}}&XuNet              
                         & 61.14 & 62.94 & 61.96 & 58.50 & 60.19 & 89.73 & 59.39  & 59.68 &  72.40 &0.7476  \\
    &SRNet               & 69.05 & 76.60 & 92.80 & 69.06 & 69.30 & 99.55 & 69.00  & 69.20 &  77.90 & 0.7243 \\
    &SiaStegNet          & 73.75 & 83.40 & 94.40 & 73.10 & 73.95 & 99.90 & 73.50  & 73.00 &  77.40 & 0.6941 \\
    &UCNet               & 73.85 & 86.35 & 96.20 & 74.35 & 74.15 & 99.80 & 74.65  & 74.45 &  79.85 & 0.6949 \\
    \rowcolor{gray!30}
    &NS-DSer (ours)      & 59.50 &\textbf{98.18} & \textbf{99.05} & 59.45 &59.57 & \textbf{99.99}  & \textbf{96.15}& 59.53  & \textbf{83.48}& \textbf{0.8817}\\
    \bottomrule
    \end{tabular}
    \end{table*}
    \begin{table}[!t]\footnotesize
        \centering
        \caption{Detection accuracy (\%) of each steganalyzer under \emph{Scenario \#4}.}
        \label{tab:s4}
        \setlength{\tabcolsep}{5pt}
        \begin{tabular}{lccccc}
        \toprule
        \emph{\#4} & XuNet  & SRNet  & SiaStegNet & UCNet  & NS-DSer (ours) \\
        \midrule
        & 57.96 & 74.80 & 75.80 & 73.00 & \textbf{86.26} \\
        \bottomrule
        \end{tabular}
        \end{table}
  \noindent \textbf{Results of \emph{Scenario \#1}}. In evaluating DM-GIS methods, to our knowledge, MN, G-S, StegaDDPM, and mas-GRDH all fall into the category of distribution-preserving methods. Theoretically, they have been proven to be statistically indistinguishable from generated cover images. Put simply, detecting these methods boils down to distinguishing between encrypted 0-1 sequences and pseudorandom 0-1 sequences. Clearly, such a problem cannot be solved in polynomial time; hence, in accordance with \Cref{eq:security2}, they are computationally secure. The experimental results in \Cref{tab:s1-3} also align with this conclusion. As for other DM-GIS methods, they can be detected by at least one steganalyzer, suggesting that their security under \emph{Scenario \#1} is not as claimed.
  
  Regarding steganalyzers, XuNet fails to detect all DM-GIS methods except LDStega and CRoSS. DM-GIS methods that SRNet, SiaStegNet, and UCNet can detect is also limited. In contrast, NS-DSer effectively detects nearly all DM-GIS methods except for distribution preserving ones. This underscores a key limitation of previous steganalyzers: their reliance on the image space, which is insufficient in detecting DM-GIS methods. NS-DSer addresses it by conducting steganalysis in the noise space. However, we observe that NS-DSer performs relatively poorly in detecting CRoSS under \emph{Scenario \#1}. This is because it focuses more on distribution differences rather than slight perturbations. That said, such a design enables faster training, as demonstrated in \Cref{tab:time}.
  
  \noindent \textbf{Results of \emph{Scenario \#2}}. \emph{Scenario \#2} includes natural images as cover images for both training and testing. The evaluation results for detecting MC, MB, LDStega, GSD, and CRoSS are comparable to those in \emph{Scenario \#1}. However, when detecting distribution-preserving methods, all steganalyzers exhibit a significant improvement in detection accuracy. As elaborated in \emph{Scenario \#1}, distribution-preserving methods are statistically indistinguishable from generated cover images. \textbf{The heightened detection accuracy for these methods inherently suggests that the five steganalyzers have not learned to differentiate between cover and stego images, but rather to distinguish between natural images and generated ones}. Ideally, a well-performing steganalyzer should yield poor results (with detection accuracy close to 50\%) when dealing with distribution-preserving methods, while achieving high accuracy (approaching 100\%) for other methods. Guided by this principle, we present a comprehensive evaluation of the five steganalyzers across nine DM-GIS methods as follows: For detection results on distribution-preserving methods, denoted as \(D_{dp}\), we first normalize them using the formula \(1-\frac{|D_{dp}-50|}{50}\). For results on other methods, denoted as \(D_{ndp}\), the normalization is performed as \(1-\frac{|D_{ndp}-100|}{100}\). Subsequently, we compute the average of all normalized results. As shown in \Cref{tab:s1-3}, NS-DSer outperforms the other four steganalyzers in terms of overall detection performance. 
  
  \noindent \textbf{Results of \emph{Scenario \#3}}. As \Cref{tab:s1-3} shows, conventional steganalyzers perform notably worse, revealing their struggle with heterogeneous data. In contrast, our NS-DSer degrades minimally, maintaining over 95\% accuracy for MC, MB, LDStega, and GSD—verifying its robustness to such data.
  
  \noindent \textbf{Results of \emph{Scenario \#4}}. \emph{Scenario \#4} simulates a steganographic scenario closer to real-world conditions. As \Cref{tab:s4} shows, conventional image steganalyzers again perform significantly worse, while NS-DSer outperforms them dramatically—further validating its resilience to heterogeneous data. These results underscore NS-DSer’s robustness and practicality in realistic scenarios.
  
  \begin{table}[!t]
  \centering
  \caption{Ablation study of sampling step number $N_s$ across \emph{Scenarios \#1 $\sim$ \#3.}}
  \label{tab:ab1}
  \begin{tabular}{ccccccc}
  \toprule
   & & MC  & MB  & LDStega & GSD  & CRoSS \\
  \midrule
  \multirow{3}{*}{\emph{\#1}}
  &$N_s=15$  & 99.78  & 99.99  & 99.99 & 99.43 & 85.72\\
  &$N_s=20$ & 99.78 & 99.99 & 99.99 & 99.58 & 85.70 \\
  &$N_s=25$ &99.82 & 99.99 & 99.99 & 99.42 & 85.52 \\
  \midrule
  \multirow{3}{*}{\emph{\#2}}&
  $N_s=15$  & 99.67 & 99.99 & 99.99 & 99.52 & 89.30 \\
  &$N_s=20$ & 99.73 &99.99  & 99.99 & 99.60 & 90.28 \\
  &$N_s=25$ & 99.73& 99.99 & 99.99 & 99.77 & 90.28 \\
  \midrule
  \multirow{3}{*}{\emph{\#3}}
  &$N_s=15$ & 98.43 & 99.12 & 99.99 & 96.65 & 83.08\\
  &$N_s=20$ & 98.18 & 99.05  & 99.99 & 96.15 & 83.48 \\
  &$N_s=25$ & 98.20 & 99.20 &99.99  & 97.53 & 83.22 \\
  \bottomrule
  \end{tabular}
  \end{table}
  \begin{table}[!t]
  \centering
  \caption{Ablation study on detection with/without prompt guidance across \emph{Scenarios \#1 $\sim$ \#3}.}
  \label{tab:ab2}
  \begin{tabular}{ccccccc}
  \toprule
   & & MC  & MB  & LDStega & GSD  & CRoSS \\
  \midrule
  \multirow{2}{*}{\emph{\#1}}
  &\emph{w/o cond.} & 99.78 & 99.99 & 99.99 & 99.58 & 85.70  \\
  &\emph{w/ cond.} & 99.60 & 99.99 & 99.99 & 99.50 & 75.18 \\
  \midrule
  \multirow{2}{*}{\emph{\#2}}
  &\emph{w/o cond.} & 99.73 &99.99  & 99.99 & 99.60 & 90.28  \\
  &\emph{w/ cond.} & 99.67 & 99.88 & 99.99 & 99.48 & 78.00 \\
  \midrule
  \multirow{2}{*}{\emph{\#3}}
  &\emph{w/o cond.} & 98.18 & 99.05  & 99.99 & 96.15 & 83.48  \\
  &\emph{w/ cond.} & 99.35 & 99.92 & 99.99 & 99.60 & 77.12 \\
  \bottomrule
  \end{tabular}
  \end{table}
  \begin{table}[!t]
  \centering
  \caption{Ablation study of sampling steps $N_s$ and prompt guidance under \emph{Scenario \#4}.}
  \label{tab:ab3}
  \begin{tabular}{cccc|cc}
  \toprule
   \emph{\#4} & $N_s=15$  & $N_s=20$  & $N_s=25$ & \emph{w/o cond.}  & \emph{w/ cond.} \\
  \midrule
      & 86.25 & 86.26 & 86.00& 86.26 & 86.90\\
  \bottomrule
  \end{tabular}
  \end{table}
  \begin{table}[!t]
      \centering
      \caption{Training duration of each steganalyzer.}
      \label{tab:time}
      \begin{tabular}{l|ccccc}
      \toprule
      Time (h) & XuNet  & SRNet  & SiaStegNet & UCNet  & NS-DSer (ours) \\
      \midrule
      & 1.84  & 5.52 & 7.90 & 8.25 & 1.11 \\
      \bottomrule
      \end{tabular}
      \end{table}
  \subsection{Ablation Study}
  \label{subsec:ablation_study}
  \noindent \textbf{Sampling Steps $N_s$}. We use three \(N_s\) values in the deterministic diffusion sampling of NS-DSer to evaluate their impact on detection. Results across \emph{Scenarios \#1\(\sim\)\#4}, as reported in \Cref{tab:ab1} and \Cref{tab:ab3}, consistently show that NS-DSer's detection performance is insensitive to  \(N_s\).
  
  \noindent \textbf{Prompt Guidance}. We use the DouBao 1.5 Vision to extract image descriptions as prompt guidance for deterministic diffusion sampling. As \Cref{tab:ab2} shows, prompt guidance barely impacts NS-DSer's detection of MC, MB, GSD, and LDStega but significantly affects CRoSS detection. However, in practice, the guidance prompts used by steganographers are usually inaccessible, so we are not responsible for this performance drop.
  \subsection{Computational Efficiency}
  \Cref{tab:time} presents the time consumption of training each steganalyzer on a single L40 GPU. Specifically, XuNet, SRNet, SiaStegNet, and UCNet are trained for 100 epochs. As shown, training NS-DSer takes the least training time, which is attributed to our novel steganalysis framework. This framework avoids performing steganalysis in complex high-dimensional image spaces and instead shifts steganalysis to low-dimensional noise spaces, thus achieving highly efficient training. Moreover, NS-DSer adopts condition-free deterministic sampling for feature extraction (as shown in \Cref{fig:framework}), further accelerating the training process.

\section{Conclusion}
\label{sec:conclusion}
In this paper, we rethink the security of DM-GIS and analyze, from both theoretical and experimental perspectives, how altering the noise distribution in diffusion models impacts DM-GIS security. We introduce NS-DSer, the first dedicated steganalysis framework for detecting DM-GIS. By shifting steganalysis from the complex high-dimensional image space to the lower-dimensional noise space of diffusion models, NS-DSer enables efficient training and testing. We evaluate NS-DSer across four increasingly challenging detection scenarios, and experimental results show it outperforms traditional image steganalyzers across all scenarios overall, bringing it closer to practical steganalysis. In the future, we will refine our feature extraction procedure to further enhance detection performance.

\bibliographystyle{unsrt}  
\bibliography{references}

\end{document}